\documentclass[letterpaper]{article}
\usepackage[utf8]{inputenc}
\usepackage[
  biblio=texdefault,
  linkoptions={capitalise,noabbrev},
  macros=false,
  boxes=false,
  algorithms=false,
  graphics=false
]{pomegranate}
\usepackage{needspace}
\usepackage{tikz}
\usepackage{bookmark}
\hypersetup{pdftitle={Sampling Line-Graph Colorings with Constant Extra Colors},
 pdfauthor={Alireza Haqi},
 pdfsubject={Bochner identities, Glauber dynamics, and spectral independence}}
\allowdisplaybreaks[1]
\numberwithin{equation}{section}

\AddToHook{env/corollary/before}{\Needspace{5\baselineskip}}
\DeclareMathOperator{\Var}{Var}

\DeclareMathOperator{\Ent}{Ent}
\DeclareMathOperator{\gap}{gap}
\newcommand{\E}{\mathbb E}
\newcommand{\R}{\mathbb R}
\newcommand{\Lop}{\mathcal L}
\newcommand{\cE}{\mathcal E}
\newcommand{\GL}{\mathrm{GL}}
\newcommand{\mix}{\mathrm{mix}}
\newcommand{\mLS}{\mathrm{mLS}}
\newcommand{\TV}{\mathrm{TV}}
\newcommand{\one}{\mathbf 1}
\newcommand{\ip}[2]{\langle #1,#2\rangle_\mu}
\newcommand{\norm}[1]{\lVert #1\rVert_\mu}

\title{Sampling Line-Graph Colorings with Constant Extra Colors}
\author[1]{Alireza Haqi}
\affil[1]{Stanford University, \href{mailto:ahaqi@stanford.edu}{\texttt{ahaqi@stanford.edu}}}
\date{}

\begin{document}
\maketitle

\begin{abstract}
Let $G$ be the line graph of a finite simple graph, with $n\geq1$ vertices
and maximum degree $\Delta$. We prove that single-site Glauber dynamics
for uniform proper $q$-colorings mixes in
$O_\Delta(n\log(n/\varepsilon))$ steps for every integer
$q\geq\Delta+5$.
Our proof uses the Bochner framework of Chen and Liu~\cite{CL26}.
\end{abstract}

\section{Introduction}
\label{sec:intro}

Sampling a proper coloring uniformly at random is a fundamental problem in
approximate counting and Markov chain algorithms. Its natural local
algorithm is the single-site Glauber chain: choose a vertex uniformly and
replace its color by a uniformly random feasible color, including its current
color. Each step inspects only one vertex and its neighborhood. The main
question is how many colors suffice for this local procedure to converge
quickly from every proper initial coloring.

The coloring-dynamics conjecture predicts mixing in time polynomial in
$n$ for $q\geq\Delta+2$, even when the maximum degree $\Delta$ grows with
$n$. We state its bounded-degree form below. For further background, see
the introduction of Wang, Zhang, and Zhang~\cite{WZZ}.
\begin{conjecture}[Coloring dynamics, bounded-degree form]
\label{conj:coloring}
For every fixed integer $\Delta\geq0$ and every fixed integer
$q\geq\Delta+2$, the Glauber chain on proper $q$-colorings of an
$n$-vertex graph of maximum degree at most $\Delta$ has mixing time
polynomial in $n$, at total variation error $1/4$.
\end{conjecture}
We study line graphs, which connect this question directly to sampling edge
colorings. For a finite simple graph $H$, the line graph $L(H)$ has vertex
set $E(H)$: distinct edges $e,f\in E(H)$ are adjacent as vertices of $L(H)$
exactly when they share an endpoint in $H$. Thus proper vertex colorings
of $L(H)$ are precisely
proper edge colorings of $H$. Throughout the paper, $\Delta$ denotes a
degree bound for the \emph{line graph}; the maximum degree of $H$ is
denoted by $d$ when needed.

\subsection{Results}

Let $P_\GL$ be the Glauber transition matrix. A \emph{feasible pinning}
fixes the colors on a subset of vertices to values that extend to a proper
coloring. If $\tau$ leaves $k$ vertices unpinned, $P_\GL^\tau$ denotes the
conditional Glauber chain, choosing uniformly among those $k$ vertices.
Our mixing-time convention counts individual updates.

\begin{theorem}[Main theorem]
\label{thm:main}
There is a sequence of positive numbers $(C_\Delta)_{\Delta\geq0}$ with the
following property. Let $H$ be a finite simple graph such that $G=L(H)$
has $n\geq1$ vertices and maximum degree $\Delta$. For every integer
$q\geq\Delta+5$, set $\kappa=q-\Delta-4$. Then, for every
$0<\varepsilon<1/2$,
\begin{equation}
 \begin{aligned}
 \gap(P_\GL)&\geq\frac{\kappa}{nq},\\
 t_\mix(P_\GL,\varepsilon)&\leq
 \min\left\{C_\Delta n\log\frac n\varepsilon,\;
 \left\lceil\frac{nq}{\kappa}
 \left(\frac n2\log q+\log\frac1{2\varepsilon}\right)\right\rceil\right\}.
 \end{aligned}
 \label{eq:main}
\end{equation}
For every feasible pinning $\tau$ with $k\geq1$ unpinned vertices,
\begin{equation}
 \gap(P_\GL^\tau)\geq\frac{\kappa}{kq},
 \qquad
 \lambda_{\max}(\Psi^\tau)\leq
 \eta:=\frac{\Delta+4}{q-\Delta-4},
 \label{eq:conditional-main}
\end{equation}
where $\Psi^\tau$ is the conditional influence matrix defined in
\cref{sec:si}. For each fixed pair $(\Delta,q)$ in this range, there
is also a constant $c_{\Delta,q}>0$, depending only on $\Delta$ and $q$,
such that
\begin{equation}
 \rho_\mLS(P_\GL)\geq\frac{c_{\Delta,q}}n.
 \label{eq:mlsi-main}
\end{equation}
\end{theorem}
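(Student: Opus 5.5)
The plan follows the Bochner framework of Chen and Liu~\cite{CL26}: establish a discrete Bochner-type inequality for the Glauber generator $\Lop$ with curvature $\kappa/(nq)$, uniformly over feasible pinnings. Everything else in \cref{thm:main} then follows by standard reductions.

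\textbf{Step 1: a curvature inequality.} For every feasible pinning $\tau$ with $k$ unpinned vertices and every $f$, I would prove
\[
\Gamma_2^\tau(f)\;\geq\;\frac{\kappa}{kq}\,\Gamma^\tau(f).
\]
Here $\Gamma$ and $\Gamma_2$ are the carré du champ and iterated carré du champ of $P_\GL^\tau$. The first step is to write $\Gamma_2$ as a sum over ordered pairs of sites $(u,v)$. Diagonal terms $u=v$ contribute a positive amount from the self-resampling. Each off-diagonal term is a commutator-type error that vanishes unless $u\sim v$. The errors come from recoloring $v$, which changes the feasible palette at $u$.

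To control these errors, I would use a coupling (Kempe-style) representation of the difference between heat-bath updates at $u$ under two palettes differing in one color. The cost is at most $1/(\text{palette size})\leq 1/(q-\Delta)$ per neighbor. Summing naively over the $\Delta$ neighbors gives curvature of order $(q-2\Delta)/(kq)$. That is the classical $2\Delta$ barrier, which is too weak.

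\textbf{Step 2: the line-graph gain.} This is where $L(H)$ is used and where I expect the main obstacle. For an edge $e=xy$ of $H$, its neighborhood splits into two cliques, namely the edges at $x$ and the edges at $y$. Within a clique all colors are distinct. So a color $c$ blocked at $e$ is blocked by at most one edge at $x$ and at most one at $y$. The plan is to pair each off-diagonal error term with a compensating diagonal term from the clique structure, using Cauchy--Schwarz weighted by the two-clique decomposition. The goal is to show that the total error at $e$ is at most $(\Delta+4)/q$ times the diagonal term, rather than $2\Delta/q$. The constant $4$ should absorb boundary effects: the current color of $e$, colors blocked on both sides, and the pinning.

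I expect the delicate part to be a careful local inequality on the at most $\Delta+1$ colors involved at a single edge. I would first attempt this by a $2\times 2$ (two-clique) matrix argument. This step yields $\Gamma_2\geq\frac{q-\Delta-4}{kq}\Gamma$, uniformly under pinnings, since pinned vertices only shrink the cliques.

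\textbf{Step 3: consequences.} The curvature bound gives $\gap(P_\GL^\tau)\geq \kappa/(kq)$ by the usual integration of $\Gamma_2\geq \rho\Gamma$ along the semigroup. Taking $\tau$ empty gives the first line of \eqref{eq:main}. The second mixing bound follows from $t_\mix\leq \gap^{-1}\bigl(\tfrac12\log(1/\mu_{\min})+\log\tfrac1{2\varepsilon}\bigr)$ with $\mu_{\min}\geq q^{-n}$.

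For the influence bound, the conditional-gap bound applied to linear test functions in the spectral-independence framework (\cref{sec:si}) gives $\lambda_{\max}(\Psi^\tau)\leq \eta$. The argument compares $\Var$ of coordinate-sum functions with the gap of $P_\GL^\tau$, as in the Chen--Liu "curvature implies spectral independence" lemma.

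Finally, $\eta$-spectral independence under all pinnings combines with bounded degree through the standard local-to-global entropy factorization theorems. These give approximate tensorization with a constant depending only on $\Delta$ and $q$, hence \eqref{eq:mlsi-main}. Since $\log(1/\mu_{\min})=O(n\log q)$ enters only through $\log\log$, the bound $t_\mix\leq C_\Delta n\log(n/\varepsilon)$ follows.
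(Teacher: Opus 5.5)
\textbf{Verdict: genuine gap.} Read as a pointwise inequality, which is what ``integrating along the semigroup'' uses, your Step~1 is false. Step~2, which carries the whole result, is only a heuristic.

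\emph{Step~1 fails already for one site.} Take $H$ a single edge, so $n=k=1$ and $\Delta=0$. The chain has generator $P-I$, where $P$ is the projection onto constants. One computes $\Gamma(f)(\sigma)=\tfrac12\bigl(\Var_\mu(f)+(f(\sigma)-\E_\mu f)^2\bigr)$ and $\Gamma_2(f)=\tfrac12\Var_\mu(f)+\tfrac12\Gamma(f)$. The test function $f=\one_{\{\sigma\}}$ shows the best pointwise constant is $\tfrac12+\tfrac1q$. This is below your target $(q-4)/q$ once $q\geq11$. Pinning all but one vertex gives the same obstruction for every $\Delta$ and large $q$.

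\emph{What the paper does instead.} It proves only the integrated inequality $\norm{\Lop\varphi}^2\geq(m-4)\ip{\varphi}{\Lop\varphi}$, which is equivalent to the gap by \cref{lem:bochner-criterion}. It does so for the constant-rate generator $\Lop=\sum_v a_v(I-P_v)$, not the heat-bath chain. Integration is essential.
\begin{itemize}
\item For moves $(u,c),(v,d)$ whose joint recoloring is proper, the cross terms at the four corners of the recoloring square have equal weight under $\mu$. They sum to $(\varphi_{00}-\varphi_{10}-\varphi_{01}+\varphi_{11})^2$. No such grouping exists at a single configuration.
\item All remaining cross terms have $u\sim v$ and $c=d$, and together they form exactly $\E_\mu\sum_c g_c^{\mathsf T}A_Gg_c$.
\item On a line graph, $A_G=J^{\mathsf T}J-2I\succeq-2I$, so this term is at least $-2\E_\mu\sum_{v,c}g_{v,c}^2=-4\cE(\varphi,\varphi)$.
\end{itemize}
That is where the $4$ comes from: the bottom eigenvalue $-2$ times the factor $2$. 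It does not come from the current color, doubly blocked colors, or the pinning.

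The degree enters only through the diagonal terms, $\E_\mu[a_v^2\Var_v(\varphi)]\geq(q-\Delta)\E_\mu[a_v\Var_v(\varphi)]$. The $q$ in $\kappa/(kq)$ enters only through the comparison $\cE_\tau\leq kq\,\cE_{P_\GL^\tau}$, using $a_v\leq q$. Your two-clique intuition is the right combinatorial fact; it is $J^{\mathsf T}J\succeq0$ written star by star. But it must be applied to the per-color gradient vectors $g_c$ in this collision form, which your proposal never isolates.

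\emph{Step~3 does not give the $q$-uniform constant $C_\Delta$ the statement requires.} The Chen--Liu--Vigoda theorem needs, besides spectral independence, two further hypotheses. It needs connectivity of every conditional coloring space. It also needs a lower bound $b$ on all positive one-site marginals under every pinning; here $b=q^{-1}(1-1/m)^\Delta$, obtained by conditioning successively on the unpinned neighbors. You verify neither. Moreover, its constants $c(D,b,\eta)$ are not uniform as $b\approx1/q\to0$, and $\log\log(1/\mu_{\min})$ can grow like $\log n+\log\log q$. So your route yields only $C_{\Delta,q}$.

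The paper uses the entropy theorem only for $\Delta+5\leq q\leq2\Delta$, where $\eta\leq\Delta+4$ and $b\geq(4/5)^\Delta/(2\Delta)$. For $q\geq2\Delta+1$ it invokes Jerrum's coupling, which is uniform in $q$.

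\emph{Smaller points.} Your spectral mixing bound needs $P_\GL$ to have nonnegative spectrum; this holds because $P_\GL$ is an average of projections. Your spectral-independence step via linear test functions is fine; it is the universality argument of Anari et al.
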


The color threshold is three colors above the threshold in
\cref{conj:coloring}, restricted to line graphs. The second mixing bound
follows directly from the spectral gap; at $q=\Delta+5$, it is
$O(n^2(\Delta+5)\log(\Delta+5))$ for fixed error. The constant $C_\Delta$
in the first bound may depend on $\Delta$; it is uniform in the palette
size $q$. The modified log-Sobolev estimate stated here allows dependence
on both $\Delta$ and $q$.

The conditional estimates are proved in \cref{sec:gap,sec:si}, and the mixing and entropy estimates in
\cref{sec:mix}.

\subsection{Prior work}

For general graphs, Jerrum~\cite{Jerrum} established efficient approximate counting of
colorings above the $2\Delta$ threshold using a rapidly mixing coloring
chain. Carlson and Vigoda~\cite[Theorem~1.1 and Corollary~1.2]{CV}
proved $O(n\log n)$ mixing for flip dynamics when $\Delta\geq125$ and
$q\geq1.809\Delta$, and deduced $O_{\Delta,q}(n\log n)$ mixing
for Glauber dynamics in the same range.

For line graphs, Abdolazimi, Liu, and Oveis Gharan~\cite[Theorem~1.1]{ALG}
introduced the matrix trickle-down method and proved that, for every fixed
$\delta>0$ and all sufficiently large root-graph maximum degrees
$d\geq d_0(\delta)$, Glauber dynamics for proper edge colorings mixes in
$n^{O_\delta(1)}$ steps at total variation error $1/4$ whenever
$q\geq(10/3+\delta)d$, where $n$ is the number of edges of the root graph.
Their theorem also covers edge-list-colorings.
Building on their method, Wang, Zhang, and Zhang~\cite[Theorem~1]{WZZ}
proved $O_\Delta(n\log n)$ mixing on line graphs when
$q\geq\Delta+O(\Delta/\log\Delta)$ for $\Delta>1$.
Here $\Delta$ is the line-graph degree bound, whereas $d$ above is the
root-graph degree. \cref{thm:main} replaces the degree-dependent
surplus in the line-graph bound by the additive constant five.

In the special case of edge colorings of trees, smaller palettes suffice.
Delcourt, Heinrich, and Perarnau~\cite{DHP} proved polynomial mixing
for Glauber dynamics when $q\geq d+1$, where $d$ is the maximum degree
of the root tree. Carlson, Chen, Feng, and Vigoda~\cite[Theorems~1 and~3]{CCFV}
proved $O_{d,q}(n)$ relaxation time, the inverse spectral gap, for
$q\geq d+2$ on arbitrary trees, and $O_{d,q}(n\log^2 n)$ mixing time
on complete regular trees in this range. These results exploit the
tree structure, whereas \cref{thm:main} applies to line graphs of
arbitrary finite simple root graphs. The degree parameters differ:
a root graph of maximum degree $d$ has line-graph maximum degree at
most $2d-2$.

Our analytic starting point is the discrete Bochner method of Boudou,
Caputo, Dai Pra, and Posta~\cite{BCDP}. We give a direct finite-sum identity
adapted to uniform coloring measures. The line-graph structure enters only
through an unsigned incidence matrix. Spectral independence is then a
consequence of the resulting conditional Poincar\'e inequalities, rather
than the output of a matrix trickle-down argument. To pass from these
variance estimates to $n\log n$ mixing, we use the entropy theorem of Chen,
Liu, and Vigoda~\cite[Theorem~1.12]{CLV}.

Related work of Chen, Wang, Zhang, and Zhang~\cite{CWZZ} studies coupling
independence and spatial mixing for edge colorings, including coupling
independence for $q\geq3d$ on graphs of maximum degree $d$. Those
correlation-decay statements concern different properties from the
Glauber bounds proved here. In particular, our argument makes no claim
of strong spatial mixing or deterministic approximate counting.

Recent work develops integrated
Bochner identities and related squared-generator arguments for sampling.
G\"obel, Jenssen, Michelen, Pappik, Perkins, and Schiller~\cite{GJMPS}
give a short Bochner proof of rapid mixing for
the hard-core model on random regular graphs beyond the tree-uniqueness
threshold. Guo and Zhang~\cite{GZ} use a related squared-generator
criterion for hard-core sampling and approximate counting on planar graphs
at sufficiently small constant activity. Chen and Liu~\cite{CLG26}
localize the Bochner expansion to stars in graphs of girth at least five,
obtaining polynomial mixing for proper $q$-colorings when
$q\geq(1+\delta)D$, for every fixed $\delta\in(0,1)$ and sufficiently
large maximum degree $D\geq D_0(\delta)$. For the Sherrington--Kirkpatrick
model, Wang~\cite{Wang26} combines the integrated Bakry--\'Emery criterion
with localization to obtain $O_\beta(n\log n)$ Glauber mixing at error
$1/4$ for every fixed inverse temperature $0\leq\beta<1/2$, with high
probability over the disorder and uniformly over external fields.
Using a related rank-one-perturbed trickle-down argument, Boban, Li, and
Oveis Gharan~\cite{BLO} obtain $O(n^2)$ mixing at error $1/4$ for the
zero-field model when $0\leq\beta<1/2+\varepsilon_0$, for some absolute
$\varepsilon_0>0$, again with high probability over the disorder.
Chen and Liu~\cite{CL26} make the connection to trickle-down explicit,
using the integrated Bochner criterion to reprove and sharpen
spectral-gap bounds for down-up walks; \cref{sec:projection-framework}
explains the connection with our argument.

\subsection{Proof overview}

We analyze the chain that assigns rate one to each feasible recoloring.
In its Bochner identity, compatible pairs of color changes form squares
and contribute nonnegative terms. Incompatible pairs assign the same
color to adjacent vertices, producing a quadratic form in the adjacency
matrix $A_G$. For a line graph $G=L(H)$, the unsigned incidence matrix $J$
of $H$ satisfies $A_G+2I=J^{\mathsf T}J$. This bounds the quadratic form
below by minus four times the Dirichlet form. The diagonal terms contribute
at least $q-\Delta$ times the Dirichlet form, giving a spectral gap of at
least $q-\Delta-4$; see \cref{sec:bochner,sec:gap}.

The same argument applies after every feasible pinning. Comparing
Dirichlet forms transfers the bound to the Glauber chain, whose gap is at
least $(q-\Delta-4)/(kq)$ when $k\geq1$ vertices remain unpinned.
The universality theorem of Anari et al.~\cite{AJKPV} then gives spectral
independence; see \cref{sec:si}.

A lower bound on the one-vertex marginals allows us to apply the entropy
theorem of Chen, Liu, and Vigoda~\cite{CLV}, yielding the modified
log-Sobolev estimate for every $q\geq\Delta+5$. For
$\Delta+5\leq q\leq2\Delta$, the parameters of this theorem depend only
on $\Delta$. Jerrum's classical coupling~\cite{Jerrum} covers
$q\geq2\Delta+1$. Together, these give
$O_\Delta(n\log(n/\varepsilon))$ mixing uniformly in $q$, as shown in
\cref{sec:mix}.

\subsection{Acknowledgments}
The author is supported by NSF grant CCF-2045354.
The author acknowledges OpenAI's GPT-6 Astra for discovering the ideas
underlying this work. The manuscript was written and revised collaboratively
by the author and GPT-6 Astra.

\section{Preliminaries}
\label{sec:prelim}

\subsection{Coloring measures and Markov chain conventions}

We use $u,v$ for vertices of the graph $G$ whose vertices are colored.
This convention also applies when $G=L(H)$, so each vertex of $G$
represents an edge of $H$. In \cref{sec:incidence} only, we work directly
with edge colorings of $H$: $e,f\in E(H)$ index the colored sites, and
$u,v\in V(H)$ denote vertices of the root graph. From \cref{sec:gap}
onward we return to vertices of $G$. Scalar test functions are denoted by
$\varphi$ or $h$.

Let $G=(V,E)$ be a finite simple graph, and write $d_G(v)$ for the degree
of $v$. For an integer $q\geq1$, write
$[q]=\{1,\ldots,q\}$ and assign each vertex a list $L_v\subseteq[q]$.
Let $\Omega$ be the set of proper list-colorings of $G$ and, whenever
$\Omega\neq\varnothing$, let $\mu$ be uniform on $\Omega$.
Ordinary $q$-colorings are the case $L_v=[q]$ for all $v$.
For real functions on $\Omega$, use
$\ip \varphi h=\E_\mu[\varphi h]$ and $\norm \varphi^2=\ip \varphi \varphi$.

For $\sigma\in\Omega$, set
\begin{equation}
 A_v(\sigma)=L_v\setminus\{\sigma(u):u\sim v\},
 \qquad a_v(\sigma)=|A_v(\sigma)|.
 \label{eq:available}
\end{equation}
Thus $A_v(\sigma)$ is the set of feasible colors at $v$ when all other
vertices are pinned according to $\sigma$.
The current color belongs to $A_v(\sigma)$. Both $A_v$ and $a_v$ depend
only on the configuration outside $v$. The Glauber update operator $P_v$
averages uniformly over the feasible colors at $v$, with all other
vertices pinned. Thus $P_v\varphi$ and $\Var_v(\varphi)$ are the conditional
expectation and variance given the colors outside $v$.

For a feasible color $c$, let $\sigma^{v\leftarrow c}$ denote the recolored
configuration. Define
\begin{equation}
 g_{v,c}(\sigma)=
 \begin{cases}
 \varphi(\sigma^{v\leftarrow c})-\varphi(\sigma),&c\in A_v(\sigma),\\
 0,&c\notin A_v(\sigma),
 \end{cases}
 \qquad g_c=(g_{v,c})_{v\in V}.
 \label{eq:gradients}
\end{equation}
The constant-rate positive Laplacian and its Dirichlet form are
\begin{equation}
 \begin{split}
 \Lop=\sum_{v\in V}\Lop_v,\qquad
 \Lop_v\varphi(\sigma)=\sum_{c\in A_v(\sigma)}
       \bigl(\varphi(\sigma)-\varphi(\sigma^{v\leftarrow c})\bigr),\\
 \cE(\varphi,\varphi)=\ip \varphi{\Lop \varphi}.
 \end{split}
 \label{eq:laplacian}
\end{equation}
Every nontrivial
move and its reverse have rate one, so $\Lop$ is self-adjoint and positive
semidefinite in $L^2(\mu)$.

Since $\Lop_v=a_v(I-P_v)$, conditioning on the colors outside $v$
and then averaging gives
\begin{equation}
 \begin{split}
 \cE(\varphi,\varphi)&=\frac12\E_\mu\sum_{v,c}g_{v,c}^2
       =\sum_v\E_\mu[a_v\Var_v(\varphi)],\\
 \norm{\Lop_v\varphi}^2&=\E_\mu[a_v^2\Var_v(\varphi)].
 \end{split}
 \label{eq:one-site-identities}
\end{equation}
Indeed, fixing the colors outside $v$ leaves a conditional state space
of size $a$, the number of available colors at $v$. On this space, the
matrix of $\Lop_v$ is $aI-\one\one^{\mathsf T}$.

For $|V|=n\geq1$, the Glauber operator is
$P_\GL=n^{-1}\sum_v P_v$. For any reversible transition matrix $P$ with
stationary distribution $\mu$, write
\[
 \cE_P(\varphi,\varphi)=\langle \varphi,(I-P)\varphi\rangle_\mu,\qquad
 \gap(P)=\inf_{\Var_\mu(\varphi)>0}\frac{\cE_P(\varphi,\varphi)}{\Var_\mu(\varphi)}.
\]
Use the analogous variational definition for $\gap(\Lop)$ with numerator
$\cE(\varphi,\varphi)$. On a nontrivial connected coloring space this is the smallest
positive eigenvalue of $\Lop$.
Variational infima over an empty set are interpreted as $+\infty$.
The mixing time is
\[
 t_\mix(P,\varepsilon)=\min\left\{t\in\mathbb Z_{\geq0}:
 \max_{\sigma\in\Omega}\|P^t(\sigma,\cdot)-\mu\|_\TV
 \leq\varepsilon\right\}.
\]
All logarithms are natural. A fully pinned configuration has mixing time
zero; we do not assign it a one-site transition kernel.

\subsection{The integrated Bochner criterion}
\label{sec:bochner-framework}

The Bochner method bounds the spectral gap by comparing the squared
Laplacian with its Dirichlet form. For self-adjoint operators, write
$A\succeq B$ if $\ip \varphi{A\varphi}\geq\ip \varphi{B\varphi}$ for every $\varphi$.
The following finite-state criterion appears in
\cite[Proposition~1.1]{BCDP}; see also~\cite[Lemma~1.3]{CL26}.

\begin{lemma}
\label{lem:bochner-criterion}
Let $K$ be a self-adjoint positive semidefinite operator on a finite
$L^2(\mu)$ space with kernel consisting exactly of the constant functions.
For every $\kappa>0$, the following are equivalent:
\begin{equation}
 \gap(K)\geq\kappa
 \quad\Longleftrightarrow\quad
 K^2\succeq\kappa K
 \quad\Longleftrightarrow\quad
 \norm{K\varphi}^2\geq\kappa\ip \varphi{K\varphi}\quad\text{for all }\varphi.
 \label{eq:bochner-criterion}
\end{equation}
\end{lemma}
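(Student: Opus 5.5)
The plan is to diagonalize $K$ and reduce all three conditions to statements about its eigenvalues. Since $K$ is self-adjoint on the finite-dimensional space $L^2(\mu)$, there is a $\mu$-orthonormal eigenbasis $\psi_0,\psi_1,\ldots,\psi_m$ with eigenvalues $\lambda_0=0<\lambda_1\leq\cdots\leq\lambda_m$. We may take $\psi_0=\one$, because the kernel is exactly the constants. Positivity of the remaining eigenvalues follows from positive semidefiniteness together with this description of the kernel. Expand $\varphi=\sum_i\alpha_i\psi_i$. Then
\[
\ip{\varphi}{K\varphi}=\sum_{i\geq1}\lambda_i\alpha_i^2,\qquad
\norm{K\varphi}^2=\ip{\varphi}{K^2\varphi}=\sum_{i\geq1}\lambda_i^2\alpha_i^2,\qquad
\Var_\mu(\varphi)=\sum_{i\geq1}\alpha_i^2,
\]
where the last identity holds because $\alpha_0=\E_\mu\varphi$.

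First, the second and third conditions are equivalent by definition. Self-adjointness gives $\ip{\varphi}{K^2\varphi}=\norm{K\varphi}^2$, so $K^2\succeq\kappa K$ is literally the third statement. It remains to show that each of these is equivalent to $\gap(K)\geq\kappa$. In the expansion, the Bochner inequality reads
\[
\sum_{i\geq1}\lambda_i(\lambda_i-\kappa)\alpha_i^2\geq0\quad\text{for all }(\alpha_i).
\]
This holds if and only if $\lambda_i\geq\kappa$ for every $i\geq1$. For the reverse implication, test with $\varphi=\psi_i$, using $\lambda_i>0$. On the other hand, the variational quotient $\sum\lambda_i\alpha_i^2/\sum\alpha_i^2$ has infimum $\lambda_1$ over nonconstant $\varphi$, and this infimum is attained at $\psi_1$. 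Hence $\gap(K)=\lambda_1$, and $\gap(K)\geq\kappa$ is equivalent to $\lambda_1\geq\kappa$. That is the same eigenvalue condition.

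One degenerate case needs attention. If the space is one-dimensional, so only constants exist, then $K=0$. In that case all three statements hold trivially: the infimum is $+\infty$ by convention, and the other two sides are both identically zero. The only genuine point of care is the identification of the kernel with the constants. This is what makes $\Var_\mu(\varphi)$ equal to the sum over the nonzero eigenmodes and forces $\lambda_1>0$. Without it, a zero eigenvalue on a nonconstant eigenvector would make the Bochner inequality hold while the gap is $0$. No step is a real obstacle, since this is standard finite-dimensional spectral calculus.
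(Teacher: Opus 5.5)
Your proof is correct. Diagonalizing $K$ in a $\mu$-orthonormal eigenbasis with $\psi_0=\one$ reduces all three conditions to $\lambda_i\geq\kappa$ for $i\geq1$, and you handle the one-dimensional case and the role of the kernel hypothesis properly. The paper does not prove this lemma itself; it cites \cite[Proposition~1.1]{BCDP} and \cite[Lemma~1.3]{CL26}, and your eigenvalue argument is the standard proof of that finite-state criterion.
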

An integrated Bochner identity rewrites $\norm{K\varphi}^2$ as a sum of
nonnegative square terms and a remainder that can be compared with
$\ip \varphi{K\varphi}$. \cref{sec:bochner} implements this
decomposition for recolorings and retains a signed remainder for the
incompatible pairs. Connectivity is checked separately in
\cref{sec:gap} to ensure the kernel condition in
\cref{lem:bochner-criterion}.

\subsection[The projection framework of Chen and Liu]{The projection framework of Chen and Liu~\cite{CL26}}
\label{sec:projection-framework}

Chen and Liu~\cite[Section~3]{CL26} apply the same criterion to sums of
conditional-expectation projections. In our notation, let
$D_v=I-P_v$ and $K=\sum_vD_v=n(I-P_\GL)$. Since $D_v^2=D_v$,
\begin{equation}
 \norm{K\varphi}^2=\sum_v\norm{D_v\varphi}^2
       +\sum_{u\ne v}\ip{D_u\varphi}{D_v\varphi},
 \qquad
 \ip \varphi{K\varphi}=\sum_v\norm{D_v\varphi}^2.
 \label{eq:projection-expansion}
\end{equation}
Suppose a symmetric matrix $C$ with nonnegative entries and $C_{vv}=0$
satisfies
\[
 \ip{D_u\varphi}{D_v\varphi}\geq-C_{uv}\norm{D_u\varphi}\norm{D_v\varphi}
 \qquad(u\ne v)
\]
for every $\varphi$. If $\lambda_{\max}(C)\leq1-\epsilon$ for some
$0<\epsilon<1$, then, with $x_v=\norm{D_v\varphi}$,
\[
 \norm{K\varphi}^2\geq x^{\mathsf T}(I-C)x
 \geq\epsilon\sum_vx_v^2=\epsilon\ip \varphi{K\varphi}.
\]
For an irreducible Glauber chain, \cref{lem:bochner-criterion}
therefore gives $\gap(P_\GL)\geq\epsilon/n$.
Chen and Liu derive these cross-term bounds from two-coordinate
conditional walks, yielding their spectral-influence criterion
\cite[Theorem~1.2 and Corollary~3.2]{CL26}.

Our identity uses $\Lop_v=a_vD_v$, whose rates are constant on feasible
moves. Grouping compatible recolorings into squares leaves a signed
quadratic form in $A_G$. Its lower bound gives
$\Lop^2\succeq(m-4)\Lop$ whenever $a_v\geq m$ on a line graph.
The Glauber gap follows by Dirichlet-form comparison in
\cref{eq:glauber-comparison}.

\section{An exact Bochner identity for recolorings}
\label{sec:bochner}

Using the Bochner method~\cite{BCDP}, we decompose
$\norm{\Lop\varphi}^2$ into nonnegative terms and a quadratic form in $A_G$.
The identity holds for arbitrary lists on any finite simple graph.
For line graphs, the root graph's incidence matrix bounds the quadratic
form from below, giving the spectral-gap estimate.

\subsection{Squares and incompatible moves}

Fix two distinct vertices $u,v$ and a feasible pinning $\tau$ on
$V\setminus\{u,v\}$. Choose a proper coloring $\sigma_{00}\in\Omega$
that agrees with $\tau$ on the pinned vertices, and write
$a=\sigma_{00}(u)$ and $b=\sigma_{00}(v)$.
Consider changing $u$'s color to $c\neq a$ and $v$'s color to
$d\neq b$, preserving the pinning. The four configurations have the
following colors at $(u,v)$:
\[
 \sigma_{00}: (a,b),\qquad
 \sigma_{10}: (c,b),\qquad
 \sigma_{01}: (a,d),\qquad
 \sigma_{11}: (c,d).
\]
The first index indicates whether $u$ has been recolored, and the second
whether $v$ has been recolored. If all four configurations belong to
$\Omega$, their set is a \emph{recoloring square}; see \cref{fig:squares}.

Let $\mathscr S$ be the collection of these squares. Each set of four
configurations is counted once, regardless of the starting corner or the
order of $u,v$. For each square, write
$\varphi_{ij}=\varphi(\sigma_{ij})$ and define
\begin{equation}
 \mathcal H(\varphi)=\frac{2}{|\Omega|}\sum_{S\in\mathscr S}
 (\varphi_{00}-\varphi_{10}-\varphi_{01}+\varphi_{11})^2.
 \label{eq:hessian}
\end{equation}
The alternating sum compares the change in $\varphi$ from recoloring $u$
before and after recoloring $v$. Its square does not depend on the starting
corner or the order of $u,v$.

\begin{figure}[!ht]
\centering
\begin{tikzpicture}[x=1cm,y=1cm,every node/.style={font=\small}]
 \node (a) at (0,0) {$(a,b)$};
 \node (b) at (2.4,0) {$(c,b)$};
 \node (c) at (0,1.5) {$(a,d)$};
 \node (d) at (2.4,1.5) {$(c,d)$};
 \draw (a)--node[below] {$u$} (b);
 \draw (a)--node[left] {$v$} (c);
 \draw (b)--node[right] {$v$} (d);
 \draw (c)--node[above] {$u$} (d);
 \node at (1.2,-.8) {Complete recoloring square};
 \node (e) at (6,0) {$(a,b)$};
 \node (f) at (8.4,0) {$(c,b)$};
 \node (g) at (6,1.5) {$(a,c)$};
 \node[draw,dashed,inner sep=4pt] (h) at (8.4,1.5) {$(c,c)$};
 \draw (e)--node[below] {$u$} (f);
 \draw (e)--node[left] {$v$} (g);
 \draw[dashed] (f)--(h);
 \draw[dashed] (g)--(h);
 \node at (7.2,-.8) {Common-color collision, $u\sim v$};
\end{tikzpicture}
\caption{The two types of pairs of individually feasible, nontrivial moves,
shown schematically. Labels give the colors at $(u,v)$; all configurations
share the same pinning outside $\{u,v\}$. At right, the fourth configuration
is unfeasible.}
\label{fig:squares}
\end{figure}

\begin{lemma}
\label{lem:bochner}
For every finite simple graph $G$, every nonempty proper list-coloring
space $\Omega$, and every function $\varphi:\Omega\to\R$,
\begin{equation}
 \norm{\Lop \varphi}^2
 =\sum_v\E_\mu[a_v^2\Var_v(\varphi)]
  +\mathcal H(\varphi)+\E_\mu\sum_{c=1}^q g_c^{\mathsf T}A_Gg_c.
 \label{eq:bochner}
\end{equation}
\end{lemma}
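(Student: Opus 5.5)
Expand $\norm{\Lop\varphi}^2=\sum_v\norm{\Lop_v\varphi}^2+\sum_{u\ne v}\ip{\Lop_u\varphi}{\Lop_v\varphi}$. By \cref{eq:one-site-identities}, the diagonal part is exactly $\sum_v\E_\mu[a_v^2\Var_v(\varphi)]$. The whole proof then reduces to showing that, for each ordered pair $u\neq v$, the cross term $\ip{\Lop_u\varphi}{\Lop_v\varphi}$ splits into a square contribution plus, when $u\sim v$, a collision contribution equal to $\E_\mu\sum_c g_{u,c}g_{v,c}$. Summing over ordered pairs then gives $\mathcal H(\varphi)$ and $\E_\mu\sum_c g_c^{\mathsf T}A_Gg_c$, since $A_G$ has zero diagonal and counts each adjacent pair in both orders.

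Fix $u\ne v$ and condition on a feasible pinning $\tau$ of $V\setminus\{u,v\}$. The conditional space is the set of pairs $(x,y)\in L_u(\tau)\times L_v(\tau)$ that are proper. Here $L_u(\tau)$ and $L_v(\tau)$ are the lists with the $\tau$-colors of the other neighbors removed, and the constraint $x\ne y$ applies iff $u\sim v$. Write $\Lop_u\varphi(\sigma)=-\sum_{c}g_{u,c}(\sigma)$. Then
\[
\ip{\Lop_u\varphi}{\Lop_v\varphi}=\frac1{|\Omega|}\sum_\sigma\sum_{c,d}g_{u,c}(\sigma)g_{v,d}(\sigma).
\]
Each summand is indexed by a configuration $\sigma_{00}$ together with two individually feasible nontrivial moves, $u\to c$ and $v\to d$. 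These fall into two types. If $\sigma_{11}$ is feasible, the triple lies in a recoloring square. Otherwise $u\sim v$ and $c=d$, which is the common-color collision of \cref{fig:squares}. (When $u\not\sim v$, every such pair is compatible.)

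For the square contributions, group by the square $S$. Each square has four corners, and at corner $\sigma_{ij}$ the two moves lead to the adjacent corners. Summing over the four corners gives
\[
\sum_{\text{corners}}(\varphi_{\text{$u$-nbr}}-\varphi_{\text{corner}})(\varphi_{\text{$v$-nbr}}-\varphi_{\text{corner}}).
\]
A direct expansion shows this equals $(\varphi_{00}-\varphi_{10}-\varphi_{01}+\varphi_{11})^2$. One checks this with the polarization identity: the product of edge differences around a square, summed over corners, is the squared alternating sum. Each unordered square contributes once per ordered pair $(u,v)$, and there are two orders. This produces the factor $2$ in \cref{eq:hessian}.

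For the collision terms, when $u\sim v$ and $c=d$ with $\sigma_{11}$ infeasible, the summand is exactly $g_{u,c}(\sigma)g_{v,c}(\sigma)$. Conversely, whenever $g_{u,c}g_{v,c}\neq0$ for adjacent $u,v$, both moves to the common color $c$ are feasible, so the pair is a collision. Pairs $(c,d)$ with $c\ne d$ are never collisions, because the target $(c,d)$ is then proper (both colors are feasible for their sites relative to $\tau$, and $c\neq d$). Hence the non-square part of the cross sum is exactly $\E_\mu\sum_c g_{u,c}g_{v,c}$.

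The main obstacle is this bookkeeping. One must verify that each square is counted with the right multiplicity: four corners, two orders, and no double counting in $\mathscr S$. One must also verify that a move to the other vertex's current color is handled correctly, for instance $c=b$ when $u\sim v$: then $g_{u,b}=0$, since $b\notin A_u(\sigma)$, so such terms vanish consistently. I would check the corner-sum identity explicitly and then sum over $\tau$ and over pairs to conclude \cref{eq:bochner}.
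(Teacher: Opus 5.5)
Your proposal is correct and follows the paper's proof essentially step for step. You expand $\norm{\Lop\varphi}^2$, identify the diagonal part via \cref{eq:one-site-identities}, and split the cross terms into complete recoloring squares, whose four corner contributions sum to the squared alternating sum, with the factor two coming from the two orderings of $u,v$. The remaining cross terms are common-color collisions with $u\sim v$, and these assemble into $\E_\mu\sum_c g_c^{\mathsf T}A_Gg_c$ because $g_{u,c}$ vanishes on infeasible or trivial moves; the four-corner identity you defer does hold by direct expansion, so nothing is missing.
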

\begin{proof}
Expanding the squared sum in~\cref{eq:laplacian} gives
\[
 \norm{\Lop \varphi}^2=\sum_v\norm{\Lop_v\varphi}^2
  +\E_\mu\sum_{u\neq v}\sum_{c,d}g_{u,c}g_{v,d}.
\]
The diagonal terms are given by~\cref{eq:one-site-identities}. A nonzero cross term
comes from individually feasible, nontrivial moves at distinct vertices.
If the joint recoloring is proper, its four corners form a unique square.
All four have the same stationary weight, and the reverse moves at the
other corners are feasible. For one ordering of $u,v$, the four
contributions are labeled below by their starting colorings:
\begin{align*}
 &\underbrace{(\varphi_{10}-\varphi_{00})(\varphi_{01}-\varphi_{00})}_{\sigma_{00}}
 +\underbrace{(\varphi_{00}-\varphi_{10})(\varphi_{11}-\varphi_{10})}_{\sigma_{10}}\\
 &\quad +\underbrace{(\varphi_{11}-\varphi_{01})(\varphi_{00}-\varphi_{01})}_{\sigma_{01}}
 +\underbrace{(\varphi_{01}-\varphi_{11})(\varphi_{10}-\varphi_{11})}_{\sigma_{11}}\\
 &= (\varphi_{00}-\varphi_{10}-\varphi_{01}+\varphi_{11})^2.
\end{align*}
There are two orderings of $u,v$, giving the factor two in
\cref{eq:hessian}.

If the joint recoloring is improper, every edge other than $uv$ remains
proper because both individual moves are feasible. Thus the only possible
obstruction is $u\sim v$ and $c=d$. Conversely, these conditions make the
joint recoloring improper. The remaining cross terms therefore have
$u\sim v$ and $c=d$. Since $g_{u,c}$ vanishes for unfeasible or trivial
moves, we may sum over every $c\in[q]$. With adjacency entries
$(A_G)_{u,v}=\one_{\{u\sim v\}}$ and the column vector
$g_c=(g_{v,c})_{v\in V}$, their contribution is
\[
 \begin{aligned}
 \E_\mu\sum_{c=1}^q\sum_{\substack{(u,v)\in V^2\\u\sim v}}g_{u,c}g_{v,c}
 &=\E_\mu\sum_{c=1}^q\sum_{u,v\in V}(A_G)_{u,v}g_{u,c}g_{v,c}\\
 &=\E_\mu\sum_{c=1}^q g_c^{\mathsf T}A_Gg_c.
 \end{aligned}
\]
The matrix identity holds at each configuration $\sigma\in\Omega$,
with all gradients evaluated at that same configuration, before taking
expectation. Each undirected edge contributes both ordered pairs
$(u,v)$ and $(v,u)$ on both sides, matching the sum over $u\neq v$ in
the initial expansion. This partitions all cross terms.
\end{proof}

\subsection{The incidence certificate for line graphs}
\label{sec:incidence}

Let $G=L(H)$, where $H$ is simple. In this subsection, we index the
colored sites by edges $e,f\in E(H)$ and reserve $u,v$ for vertices of
$H$. Thus $L_e$, $A_e$, $a_e$, $P_e$, $\Var_e$, and
$\Lop_e$ denote the lists, available colors, and one-site quantities
already defined for vertices of $G$; recoloring edge $e$ gives
$\sigma^{e\leftarrow c}$. In particular,
$g_{e,c}(\sigma)=\varphi(\sigma^{e\leftarrow c})-\varphi(\sigma)$ for
$c\in A_e(\sigma)$, and $g_{e,c}(\sigma)=0$ otherwise, so
$g_c=(g_{e,c})_{e\in E(H)}$.
Let $J$ be the unsigned incidence matrix of $H$, with rows indexed by
vertices $v\in V(H)$ and columns by edges $e\in E(H)$:
$J_{v,e}=\mathbf1_{\{v\in e\}}$. For edges $e,f$,
$(J^{\mathsf T}J)_{e,f}=|e\cap f|$. Every edge has two endpoints, and
distinct edges share at most one endpoint. Hence
\begin{equation}
 J^{\mathsf T}J=2I+A_G.
 \label{eq:incidence}
\end{equation}
The collision term in \cref{lem:bochner} therefore equals
\begin{equation}
 \E_\mu\sum_c g_c^{\mathsf T}A_Gg_c
 =\E_\mu\sum_c\|Jg_c\|_2^2-4\cE(\varphi,\varphi).
 \label{eq:collision}
\end{equation}
Here $(Jg_c)_v=\sum_{e\ni v}g_{e,c}$ for each vertex $v\in V(H)$.
The factor four is the product of the adjacency lower bound $-2I$ and
the factor two in $\E_\mu\sum_{e,c}g_{e,c}^2=2\cE(\varphi,\varphi)$.

\Needspace{9\baselineskip}
\begin{theorem}[Bochner certificate]
\label{thm:certificate}
Let $H$ be a finite simple graph, let $G=L(H)$, and let $\Omega$ be a
nonempty proper edge-list-coloring space of $H$. If a real number $m$
satisfies $a_e(\sigma)\geq m$ for every $e\in E(H)$ and
$\sigma\in\Omega$, then every
$\varphi:\Omega\to\R$ satisfies
\begin{equation}
 \begin{split}
 \norm{\Lop \varphi}^2-(m-4)\cE(\varphi,\varphi)
 ={}&\mathcal H(\varphi)+\E_\mu\sum_c\|Jg_c\|_2^2\\
 &+\sum_{e\in E(H)}\E_\mu[a_e(a_e-m)\Var_e(\varphi)]\geq0.
 \end{split}
 \label{eq:certificate}
\end{equation}
\end{theorem}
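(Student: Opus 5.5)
The plan is to combine the exact identity of \cref{lem:bochner} with the incidence rewriting \cref{eq:collision}, and then redistribute the diagonal terms using the one-site identities \cref{eq:one-site-identities}. Specializing \cref{lem:bochner} to $G=L(H)$, with sites indexed by $e\in E(H)$, gives
\[
 \norm{\Lop\varphi}^2=\sum_e\E_\mu[a_e^2\Var_e(\varphi)]+\mathcal H(\varphi)
 +\E_\mu\sum_c g_c^{\mathsf T}A_Gg_c .
\]
We then substitute \cref{eq:collision} for the last term. That equation follows from \cref{eq:incidence}:
\[
 g_c^{\mathsf T}A_Gg_c=\|Jg_c\|_2^2-2\|g_c\|_2^2
\]
holds pointwise in $\sigma$. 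Summing over $c$ and taking expectations, we get $\E_\mu\sum_{e,c}g_{e,c}^2=2\cE(\varphi,\varphi)$ by \cref{eq:one-site-identities}. This produces $\E_\mu\sum_c\|Jg_c\|_2^2-4\cE(\varphi,\varphi)$.

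Next we move $(m-4)\cE(\varphi,\varphi)$ to the left-hand side. The $-4\cE$ from the collision term cancels the $+4\cE$ coming from $-(m-4)\cE$, so it remains to absorb $-m\cE(\varphi,\varphi)$. Using $\cE(\varphi,\varphi)=\sum_e\E_\mu[a_e\Var_e(\varphi)]$ from \cref{eq:one-site-identities}, we have
\[
 \sum_e\E_\mu[a_e^2\Var_e(\varphi)]-m\,\cE(\varphi,\varphi)=\sum_e\E_\mu[a_e(a_e-m)\Var_e(\varphi)].
\]
This gives the identity in \cref{eq:certificate}. One point needs care: $a_e$ depends only on the colors outside $e$, so it may be moved inside the conditional variance, consistent with the stated one-site identities.

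For nonnegativity, $\mathcal H(\varphi)\ge0$ and $\|Jg_c\|_2^2\ge0$ trivially. Each $a_e(a_e-m)\Var_e(\varphi)$ is nonnegative pointwise because $a_e\ge m$ and $a_e\ge1$ (the current color is available), hence $a_e\ge0$. The only subtlety is the case $m<0$. There $a_e-m>0$ and $a_e\ge 1>0$, so the product is still nonnegative and no case split is needed.

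No step is a real obstacle, since all ingredients are already established. The point needing the most care is the bookkeeping of the factor two between ordered and unordered sums. We must check that $\sum_c\|g_c\|_2^2=\sum_{e,c}g_{e,c}^2$ is exactly the quantity that equals $2\cE$ in \cref{eq:one-site-identities}, and that the sum over $c\in[q]$ in \cref{lem:bochner} includes all feasible colors, with the zero convention for infeasible or trivial moves.
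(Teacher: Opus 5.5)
Your proposal is correct and matches the paper's proof: substitute \cref{eq:collision} into \cref{eq:bochner}, subtract $(m-4)\cE(\varphi,\varphi)$, rewrite $\cE(\varphi,\varphi)=\sum_e\E_\mu[a_e\Var_e(\varphi)]$ via \cref{eq:one-site-identities}, and check nonnegativity term by term. Your extra bookkeeping spells out what the paper's two-line proof leaves implicit, namely the pointwise identity $g_c^{\mathsf T}A_Gg_c=\|Jg_c\|_2^2-2\|g_c\|_2^2$ and the sign of $a_e(a_e-m)$ via $a_e\geq\max\{m,1\}$.
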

\begin{proof}
Substitute~\cref{eq:collision} in~\cref{eq:bochner}, then subtract
$(m-4)\cE(\varphi,\varphi)$ and use~\cref{eq:one-site-identities}. Each term on the right is
nonnegative.
\end{proof}

\section{Spectral gaps}
\label{sec:gap}

We return to vertex notation on $G=L(H)$: $u,v$ denote vertices of $G$,
each corresponding to an edge of $H$. In particular, $a_v$ denotes the
same feasible-color count written $a_e$ in \cref{sec:incidence}.

To convert~\cref{eq:certificate} into a Poincar\'e inequality for the
whole measure, we need connectivity of the coloring space. The following
standard fact also applies to the residual lists after pinning.

\begin{lemma}[Folklore]
\label{lem:connectivity}
If $|L_v|\geq d_G(v)+2$ for every vertex of a finite simple graph, then
proper list-colorings exist and any two are joined by a sequence of
feasible single-vertex recolorings.
\end{lemma}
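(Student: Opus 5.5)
Existence follows by greedy coloring: process the vertices in any order. When $v$ is reached, at most $d_G(v)$ colors are blocked, and $|L_v|\geq d_G(v)+1$, so a free color remains. For connectivity, I will show that any proper list-coloring $\sigma$ can be moved by feasible single-vertex recolorings to any other proper list-coloring $\tau$. The argument is an induction on the number of vertices, strong enough to be reused after pinning.

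The inductive step picks a vertex $w$ and removes it. The graph $G-w$ with the same lists satisfies $|L_v|\geq d_G(v)+2\geq d_{G-w}(v)+2$, so the induction hypothesis applies to it. The goal is to reach from $\sigma$ a configuration that agrees with $\tau$ at $w$. We then freeze $w$ and pass to $G-w$ with residual lists $L_v\setminus\{\tau(w)\}$ for neighbors $v$ of $w$. These lists still satisfy $|L_v\setminus\{\tau(w)\}|\geq d_G(v)+1=d_{G-w}(v)+2$. Non-neighbors keep their lists and have unchanged or larger slack. By induction, the restrictions of the current coloring and of $\tau$ to $G-w$ are joined by feasible recolorings of $G-w$ with these residual lists. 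These lift to feasible recolorings of $G$ with $w$ fixed at $\tau(w)$.

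The main step is recoloring $w$ to $c=\tau(w)$. If $c\neq\sigma(w)$, let $N_c$ be the neighbors $u$ of $w$ with $\sigma(u)=c$. First recolor each $u\in N_c$, one at a time, to a color in $L_u$ that differs from $c$ and from the colors of $u$'s other neighbors. The neighbors of $u$ other than $w$ block at most $d_G(u)-1$ colors. Its neighbor $w$ blocks $\sigma(w)$, and we also exclude $c$. That is at most $d_G(u)+1$ excluded colors out of $|L_u|\geq d_G(u)+2$, so a valid choice exists. Since $c\neq\sigma(w)$, the color $c$ is never reintroduced near $w$. Recoloring elements of $N_c$ does not create new neighbors of $w$ colored $c$, and each of these moves is feasible because the other vertices' colors are fixed at the moment of the move. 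After these moves, $c$ is free at $w$ ($c\in L_w$ since $\tau$ is a list-coloring), and we recolor $w$ to $c$.

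The only delicate point is the bookkeeping in this main step. The margin of $2$ is needed exactly because each neighbor must avoid both $c$ and the current color of $w$; margin $1$ would fail here. The base case of the empty graph is trivial. The $+2$ condition is preserved by both deletion and pinning, as checked above, so the induction closes.
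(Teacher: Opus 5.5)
Your proof is correct and complete. Greedy coloring gives existence, and the connectivity step is exactly the standard folklore argument: move the $\tau(w)$-colored neighbors of $w$ aside (each such $u$ has at most $d_G(u)+1$ excluded colors, so a choice exists), recolor $w$, then induct on $G-w$ with the residual lists $L_v\setminus\{\tau(w)\}$. The paper states this lemma without proof, so there is no alternative route to compare against.
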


\begin{corollary}
\label{cor:gap}
Suppose $G=L(H)$ for a finite simple graph $H$, the list-coloring space
is nonempty and connected, and $a_v(\sigma)\geq m>4$ everywhere. Then
\begin{equation}
 \gap(\Lop)\geq m-4,
 \qquad (m-4)\Var_\mu(\varphi)\leq\cE(\varphi,\varphi).
 \label{eq:poincare}
\end{equation}
\end{corollary}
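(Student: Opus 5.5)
The plan is to combine \cref{thm:certificate} with \cref{lem:bochner-criterion}, applied to $K=\Lop$. The certificate gives, for every $\varphi:\Omega\to\R$,
\[
 \norm{\Lop\varphi}^2\geq(m-4)\,\ip{\varphi}{\Lop\varphi}.
\]
This is exactly the third condition in \cref{eq:bochner-criterion} with $\kappa=m-4>0$. What remains is to check the hypotheses of \cref{lem:bochner-criterion}, and then to read off the Poincaré inequality from the variational definition of $\gap(\Lop)$.

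The hypotheses are checked as follows.
\begin{itemize}
 \item The space $L^2(\mu)$ is finite-dimensional because $\Omega$ is finite.
 \item $\Lop$ is self-adjoint and positive semidefinite, as noted after \cref{eq:laplacian}: every nontrivial move and its reverse both have rate one, and $\mu$ is uniform.
 \item The kernel consists exactly of the constants. By \cref{eq:one-site-identities},
 \[
  \ip{\varphi}{\Lop\varphi}=\tfrac12\E_\mu\sum_{v,c}g_{v,c}^2 .
 \]
 Since $\mu$ has full support on $\Omega$, $\Lop\varphi=0$ forces $\varphi(\sigma^{v\leftarrow c})=\varphi(\sigma)$ for every feasible single-site move. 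Connectivity of $\Omega$ under such moves then makes $\varphi$ constant. Conversely, constants are annihilated by $\Lop$ directly from \cref{eq:laplacian}.
\end{itemize}
With these in place, \cref{lem:bochner-criterion} gives $\gap(\Lop)\geq m-4$.

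For the inequality $(m-4)\Var_\mu(\varphi)\leq\cE(\varphi,\varphi)$, there are two cases. If $\Var_\mu(\varphi)>0$, it is the definition of $\gap(\Lop)$ as an infimum. If $\Var_\mu(\varphi)=0$, then $\varphi$ is constant and both sides vanish. The degenerate case $|\Omega|=1$ falls under the convention that the infimum over an empty set is $+\infty$, and the inequality is trivial there.

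No step here is a genuine obstacle, since the analytic content sits in \cref{thm:certificate}. The only point needing care is the kernel identification. There I would stress that \cref{eq:certificate} uses the uniform lower bound $a_v\geq m$ on all of $\Omega$, whereas connectivity is a separate hypothesis, which in applications comes from \cref{lem:connectivity}.
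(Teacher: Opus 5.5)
Your proposal is correct and follows the paper's proof: the paper also uses connectivity to identify the kernel of $\Lop$ with the constants, then applies \cref{lem:bochner-criterion} to \cref{thm:certificate} with $\kappa=m-4$. Your additional checks spell out details the paper leaves implicit, namely self-adjointness, the full-support argument for the kernel, and the degenerate cases $\Var_\mu(\varphi)=0$ and $|\Omega|=1$.
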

\begin{proof}
Connectivity makes the kernel of $\Lop$ consist of the constant functions.
Apply \cref{lem:bochner-criterion} to
\cref{thm:certificate} with $\kappa=m-4$.
\end{proof}

Now suppose $G$ has maximum degree at most $\Delta$ and
$q\geq\Delta+5$. Put $m=q-\Delta$ and $\kappa=m-4$.
For a feasible pinning $\tau$ on $\Lambda\subseteq V$, write
$U=V\setminus\Lambda$, $F=G[U]$, and
\begin{equation}
 L_v^\tau=[q]\setminus\{\tau(u):u\in\Lambda,\ u\sim v\},
 \qquad v\in U.
 \label{eq:residual-lists}
\end{equation}
The conditional measure $\mu^\tau$ is uniform on proper list-colorings
of $F$. We use $a_v$ and $\Var_v$ for this conditional model and denote
its constant-rate Dirichlet form by $\cE_\tau$.
Since at most $d_G(v)-d_F(v)$ colors were removed,
\begin{equation}
 |L_v^\tau|\geq q-d_G(v)+d_F(v)\geq m+d_F(v).
 \label{eq:list-surplus}
\end{equation}
Consequently every remaining vertex has at least $m$ available colors,
and \cref{lem:connectivity} gives connectivity.
Moreover, $F$ is the line graph of the simple graph obtained from $H$
by deleting the edges corresponding to $\Lambda$. Thus
\cref{cor:gap} gives, uniformly in $\tau$,
\begin{equation}
 \kappa\Var_{\mu^\tau}(\varphi)\leq\cE_\tau(\varphi,\varphi).
 \label{eq:conditional-poincare}
\end{equation}

For $k=|U|\geq1$, compare with the conditional Glauber chain:
\begin{equation}
 \begin{split}
 \cE_\tau(\varphi,\varphi)
 &=\sum_{v\in U}\E_{\mu^\tau}[a_v\Var_v(\varphi)]\\
 &\leq q\sum_{v\in U}\E_{\mu^\tau}[\Var_v(\varphi)]
 =kq\,\cE_{P_\GL^\tau}(\varphi,\varphi).
 \end{split}
 \label{eq:glauber-comparison}
\end{equation}
Together with~\cref{eq:conditional-poincare}, this proves both gap
statements in \cref{thm:main}. Residual graphs may be disconnected;
the connectivity argument concerns their coloring spaces and still applies.

\section{Spectral independence}
\label{sec:si}

For a feasible pinning $\tau$ with $k\geq1$ unpinned vertices, let $\Psi^\tau$
be the influence matrix indexed by vertex-color pairs of positive marginal
probability. Following~\cite[Definitions~1.10--1.11]{CLV}, its entry at
$(u,c),(v,d)$ is
$\mu^\tau(\sigma_v=d\mid\sigma_u=c)-\mu^\tau(\sigma_v=d)$ for $u\neq v$,
and its within-vertex blocks are zero; spectral independence bounds
$\lambda_{\max}(\Psi^\tau)$ under every pinning. The universality theorem
of Anari, Jain, Koehler, Pham, and Vuong~\cite[Theorem~30]{AJKPV} implies
that a Glauber gap of at least $1/(Ck)$ gives
$\lambda_{\max}(\Psi^\tau)\leq C-1$ for $C\geq1$ in this convention:
on arrays centered with respect to each one-site marginal, their
correlation matrix acts as $I+\Psi^\tau$. Applying this theorem with $C=q/\kappa$ to the
gap proved in \cref{sec:gap} gives the following corollary.

\begin{corollary}
\label{cor:si}
Under the assumptions of \cref{thm:main}, every conditional
influence matrix satisfies
\begin{equation}
 \lambda_{\max}(\Psi^\tau)\leq\frac q\kappa-1
 =\frac{\Delta+4}{q-\Delta-4}.
 \label{eq:si}
\end{equation}
\end{corollary}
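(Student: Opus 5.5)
The plan is to combine the conditional Glauber gap from \cref{sec:gap} with the universality theorem of \cite[Theorem~30]{AJKPV}. Fix a feasible pinning $\tau$ with $k\geq1$ unpinned vertices. By \cref{eq:conditional-poincare} and \cref{eq:glauber-comparison},
\[
 \gap(P_\GL^\tau)\geq\frac{\kappa}{kq}=\frac{1}{Ck},\qquad C:=\frac q\kappa .
\]
Since $\kappa=q-\Delta-4\leq q$, we have $C\geq1$, so the hypothesis of the universality theorem holds in the normalization described before the corollary.

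The key step is to check that the conclusion of \cite[Theorem~30]{AJKPV} really bounds $\lambda_{\max}(\Psi^\tau)$ as defined here. Their statement concerns the correlation matrix of the indicator array $(\one\{\sigma_v=c\})_{(v,c)}$ under $\mu^\tau$, restricted to arrays $x$ that are centered with respect to each one-site marginal, i.e.\ $\sum_c\mu^\tau(\sigma_v=c)x_{v,c}=0$. I would verify directly that on such arrays the variance of the linear statistic $f=\sum_{v,c}x_{v,c}\one\{\sigma_v=c\}$, normalized by $\sum_v\Var_{\mu^\tau}(\sum_c x_{v,c}\one\{\sigma_v=c\})$, equals the Rayleigh quotient of $I+\Psi^\tau$ in the appropriately weighted inner product. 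The diagonal blocks contribute the identity, and the off-diagonal covariances $\mu^\tau(\sigma_u=c,\sigma_v=d)-\mu^\tau(\sigma_u=c)\mu^\tau(\sigma_v=d)$, divided by the marginal weight $\mu^\tau(\sigma_u=c)$, give exactly the entries of $\Psi^\tau$.

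The argument runs as follows. Test the Glauber Poincar\'e inequality on the linear statistic $f$. Since $\Var_v(f)\leq\Var_v$ of the $v$-term alone, we have $k\,\cE_{P_\GL^\tau}(f,f)\leq\sum_v\E[\Var_v(f_v)]\leq\sum_v\Var(f_v)$. Combining this with the gap gives $\Var(f)\leq C\sum_v\Var(f_v)$, hence $\lambda_{\max}(I+\Psi^\tau)\leq C$ on centered arrays. The off-centered directions, namely the per-vertex constant shifts, lie in the kernel of $\Psi^\tau$, because $\sum_d\bigl(\mu^\tau(\sigma_v=d\mid\sigma_u=c)-\mu^\tau(\sigma_v=d)\bigr)=0$. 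Since $\Psi^\tau$ is self-adjoint in the marginal-weighted inner product, its spectrum is real, so $\lambda_{\max}(\Psi^\tau)\leq C-1$. We restrict to pairs with positive marginal, which is exactly the indexing used in the definition of $\Psi^\tau$.

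Finally, I would substitute $C=q/\kappa$ to obtain
\[
 \frac q\kappa-1=\frac{q-\kappa}{\kappa}=\frac{\Delta+4}{q-\Delta-4}.
\]
The main obstacle is the normalization bookkeeping: checking that the marginal-weighted inner product, the centering, and the zero within-vertex blocks match the convention of \cite{CLV} exactly. That check, rather than any new estimate, is what ensures the "$-1$" is correct.
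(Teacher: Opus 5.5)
Your proposal is correct and follows the paper's argument: the paper applies \cite[Theorem~30]{AJKPV} with $C=q/\kappa\geq1$ to the conditional Glauber gap $\kappa/(kq)$, and obtains $q/\kappa-1=(\Delta+4)/(q-\Delta-4)$ exactly as you do. Your linear-test-function computation correctly makes explicit the normalization check that the paper leaves to the citation. It uses $\Var_v(f)=\Var_v(f_v)$, the identity $\Var(f)=\langle x,(I+\Psi^\tau)x\rangle$ in the marginal-weighted inner product on centered arrays, and the fact that per-vertex constants lie in the kernel of $\Psi^\tau$, where $C-1\geq0$ covers the zero eigenvalue.
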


\section{Entropy and mixing with a uniform palette bound}
\label{sec:mix}

To obtain $O(n\log n)$ mixing for bounded degree, we apply an entropy
theorem after checking its remaining hypotheses.
For a reversible chain, our normalization is
\begin{equation}
 \rho_\mLS(P)=\inf_{\substack{\varphi>0\\\Ent_\mu(\varphi)>0}}
 \frac{\langle \varphi,(I-P)\log \varphi\rangle_\mu}{\Ent_\mu(\varphi)},
 \qquad
 \Ent_\mu(\varphi)=\E_\mu\left[\varphi\log\frac \varphi{\E_\mu \varphi}\right].
 \label{eq:mlsi}
\end{equation}

\begin{theorem}[Chen--Liu--Vigoda~\cite{CLV}, Theorem~1.12]
\label{thm:entropy}
Let $D\geq3$ be an integer and $b,\eta>0$. Consider a Gibbs distribution
on a graph with $n\geq1$ vertices and maximum degree at most $D$.
Suppose every
conditional configuration space is connected under single-site changes,
every positive one-site marginal under every pinning is at least $b$,
and every conditional influence matrix has largest eigenvalue at most
$\eta$. There are positive constants $c(D,b,\eta)$ and $C(D,b,\eta)$ such
that its Glauber chain satisfies
\[
 \rho_\mLS(P_\GL)\geq\frac{c(D,b,\eta)}n,
 \qquad
 t_\mix(P_\GL,\varepsilon)\leq
 C(D,b,\eta)n\log\frac n\varepsilon
 \quad(0<\varepsilon<1/2).
\]
\end{theorem}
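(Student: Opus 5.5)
Since \cref{thm:entropy} is imported from~\cite{CLV}, what needs to be done here is to check that its hypotheses and conventions match ours and to recall the structure of its proof, so that the constants are seen to depend only on $(D,b,\eta)$. The plan has three steps: an entropy contraction for block down-up walks, a reduction from blocks to single sites, and the passage from the modified log-Sobolev constant to mixing.

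\emph{Step 1: block entropy contraction.} Fix a pinning and consider the down-up walk that resamples a uniformly random set of $\ell=\lceil\theta n\rceil$ unpinned sites. I would use the local-to-global entropy argument of~\cite{CLV}. The key local input is that $\eta$-spectral independence, together with $b$-marginal boundedness, yields entropy contraction for every two-level local walk in the simplicial complex of partial configurations. Bounded marginals are used to pass from the variance, which is what spectral independence controls, to entropy, at a cost depending only on $b$ and $\eta$. Chaining these local contractions over the levels from $n$ down to $n-\ell$ should give $\ell$-uniform block factorization of entropy,
\[
 \Ent_\mu(\varphi)\le C_1\,\frac{n}{\ell}\,\E_{S}\,\E_\mu\bigl[\Ent_S(\varphi)\bigr],
\]
where $S$ is a uniform $\ell$-subset of the sites. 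The constant $C_1$ should depend only on $\theta$, $\eta$ and $b$. This requires $\theta$ small in terms of $\eta$, so that the product of the level-wise factors stays bounded independently of $n$.

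\emph{Step 2: from blocks to single sites.} This is the step I expect to be the main obstacle, and it is the only place where the degree bound $D$ enters. The goal is to show that $\Ent_S(\varphi)$ is at most $C_2(D,b)$ times the sum of the single-site conditional entropies $\Ent_v$, averaged over $v\in S$ and over the outside configuration. I would first try the approach of~\cite{CLV}. With $\theta$ small depending on $D$, a random $\theta n$-subset of a degree-$D$ graph splits, in expectation, into connected components of bounded size. Conditional distributions on distinct components are independent, so block entropy tensorizes across components. On each component of size $s$, a crude bound gives approximate tensorization with a constant depending only on $s$, $b$ and $D$. This crude bound uses the connectivity of every conditional configuration space together with the marginal bound $b$ for all pinnings. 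The difficulty is that components can be large with small but nonzero probability, so the exponential-in-$s$ constants must be beaten by the exponential tail of the component sizes. Handling this tail is the delicate part.

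\emph{Step 3: from the modified log-Sobolev constant to mixing.} Combining Steps 1 and 2 gives approximate tensorization of entropy with constant $C(D,b,\eta)$. This is equivalent to the bound $\rho_\mLS(P_\GL)\ge c(D,b,\eta)/n$ in the normalization of~\cref{eq:mlsi}. For the mixing statement I would use the standard estimate
\[
 t_\mix(\varepsilon)\le\rho_\mLS^{-1}\Bigl(\log\log\frac1{\mu_{\min}}+\log\frac1{2\varepsilon^2}\Bigr).
\]
Applying the marginal bound successively along a chain rule decomposition gives $\mu_{\min}\ge b^n$. Hence $\log\log(1/\mu_{\min})\le\log n+\log\log(1/b)$, and for $0<\varepsilon<1/2$ the right-hand side is at most $C(D,b,\eta)\,n\log(n/\varepsilon)$.

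It remains to match conventions. Our $\Psi^\tau$ has zero within-vertex blocks, which is the convention of~\cite[Definitions~1.10--1.11]{CLV}. Our mixing time counts individual updates, which is also their convention. Finally, $D\ge3$ is harmless, since the degree bound may always be enlarged.
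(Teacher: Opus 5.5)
Your proposal is correct and matches the paper. The paper does not reprove this statement; it imports it from \cite[Theorem~1.12]{CLV}. Your three steps are the Chen--Liu--Vigoda argument: uniform block factorization from spectral independence plus marginal boundedness via local-to-global entropy contraction, reduction to single sites through the component structure of a sparse random block, and $\mu_{\min}\geq b^n$ to pass from $\rho_\mLS$ to mixing. Your convention checks are also the ones the paper relies on.

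Two minor corrections. First, Step~1 does not need $\theta$ small in terms of $\eta$: for each fixed $\theta$, the block-factorization constant depends only on $\theta$, $\eta$ and $b$. It is Step~2 that forces $\theta$ small, in terms of both $D$ and $b$. Second, approximate tensorization implies the bound $\rho_\mLS(P_\GL)\geq c/n$; it is not equivalent to it.
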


\begin{lemma}
\label{lem:marginal}
Under the assumptions of \cref{thm:main}, for every feasible
pinning $\tau$, every unpinned vertex $v$, and every $c\in L_v^\tau$,
\begin{equation}
 \mu^\tau(\sigma_v=c)\geq
 b:=\frac1q\left(1-\frac1m\right)^\Delta,
 \qquad m=q-\Delta\geq5.
 \label{eq:marginal}
\end{equation}
\end{lemma}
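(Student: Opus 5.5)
We prove the bound by the standard ``neighbors first'' averaging argument. We condition on the colors of all unpinned vertices other than $v$, and then average the resulting conditional probability of $\sigma_v=c$ over the law of the neighbors. Fix a feasible pinning $\tau$, an unpinned vertex $v$, and $c\in L_v^\tau$. By the spatial Markov property of the uniform measure,
\[
 \mu^\tau(\sigma_v=c)=\E_{\mu^\tau}\!\left[\frac{\one_{\{c\in A_v(\sigma)\}}}{a_v(\sigma)}\right]
 \geq\frac1q\,\mu^\tau\bigl(c\notin\{\sigma(u):u\sim v,\ u\text{ unpinned}\}\bigr),
\]
using $a_v\leq|L_v^\tau|\leq q$. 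Here $c\in L_v^\tau$ already excludes the colors of pinned neighbors. So it suffices to show that the probability that no unpinned neighbor of $v$ has color $c$ is at least $(1-1/m)^{\Delta}$.

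Let $u_1,\dots,u_j$ be the unpinned neighbors of $v$, with $j\leq d_F(v)\leq\Delta$. We expose them one at a time and use the chain rule:
\[
 \mu^\tau\bigl(\sigma(u_i)\neq c\ \forall i\bigr)
 =\prod_{i=1}^j\mu^\tau\bigl(\sigma(u_i)\neq c\mid \sigma(u_1),\dots,\sigma(u_{i-1})\neq c\bigr).
\]
For each factor, we condition further on the full configuration outside $u_i$ (consistent with the event on $u_1,\dots,u_{i-1}$, which does not involve $\sigma(u_i)$). The color at $u_i$ is then uniform on $A_{u_i}(\sigma)$. Its conditional probability of equaling $c$ is at most $1/a_{u_i}\leq 1/m$, since every unpinned vertex has at least $m$ available colors under any feasible configuration by~\cref{eq:list-surplus}: the residual list has size at least $m+d_F(u_i)$ and at most $d_F(u_i)$ colors are blocked by unpinned neighbors. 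Averaging, each factor is at least $1-1/m$. Hence the product is at least $(1-1/m)^j\geq(1-1/m)^\Delta$, which gives~\cref{eq:marginal}.

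The only point needing care is the conditioning step in the chain rule. The event $\{\sigma(u_1),\dots,\sigma(u_{i-1})\neq c\}$ is measurable with respect to the configuration off $u_i$, so conditioning on the full outside configuration is legitimate. We must also check that the uniform bound $a_{u_i}\geq m$ holds for every configuration in the support of $\mu^\tau$; this is exactly~\cref{eq:list-surplus}, applied to the pinning $\tau$ itself. We also note that $m\geq5$ guarantees $1-1/m>0$, and that the resulting $b$ depends only on $(\Delta,q)$. For $q\leq2\Delta$ it is bounded below in terms of $\Delta$ alone, which is what the later application of \cref{thm:entropy} needs.
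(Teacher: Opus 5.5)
Your proof is correct and is essentially the paper's argument. The paper likewise exposes the unpinned neighbors $u_1,\dots,u_\ell$ one at a time, with $B_j$ the event that none of $u_1,\dots,u_j$ uses $c$. It bounds $\mu^\tau(B_j)\geq(1-1/m)\mu^\tau(B_{j-1})$ by conditioning on the configuration off $u_j$, then concludes from the fact that $c$ is feasible at $v$ on $B_\ell$ with conditional probability $1/a_v\geq1/q$. The only difference is cosmetic: you write the induction as a product of conditional probabilities, which is legitimate because each conditioning event has positive probability by the same induction, while the paper states the unconditional recursion directly.
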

\begin{proof}
Conditional on all colors other than that at a remaining vertex $u$,
any specified color has probability either zero or $1/a_u\leq1/m$.
List the unpinned neighbors of $v$ as $u_1,\ldots,u_\ell$ and let $B_j$
be the event that none of $u_1,\ldots,u_j$ uses $c$. The event $B_{j-1}$
is measurable without the color at $u_j$, so
\[
 \mu^\tau(B_j)
 =\E_{\mu^\tau}\!\left[
 \one_{B_{j-1}}\mu^\tau(\sigma_{u_j}\neq c\mid\sigma_{U\setminus\{u_j\}})
 \right]
 \geq(1-1/m)\mu^\tau(B_{j-1}).
\]
It follows that $\mu^\tau(B_\ell)\geq(1-1/m)^\ell$.
On $B_\ell$, the color $c$ is feasible at $v$, including compatibility with
the pinned neighbors by the definition of $L_v^\tau$. Its conditional
probability is then $1/a_v\geq1/q$. Averaging and using $\ell\leq\Delta$
proves~\cref{eq:marginal}.
\end{proof}

\begin{proof}[Proof of \cref{thm:main}]
The spectral-gap and influence bounds were proved in
\cref{sec:gap,sec:si}. Each $P_v$ is an orthogonal projection in
$L^2(\mu)$, so $P_\GL=n^{-1}\sum_vP_v$ has nonnegative spectrum.
Since $\mu$ is uniform and $|\Omega|\leq q^n$, the standard spectral
mixing estimate (see~\cite[Lemma~4]{WZZ}) gives, for every integer $t\geq0$,
\[
 \max_{\sigma\in\Omega}
 \|P_\GL^t(\sigma,\cdot)-\mu\|_\TV
 \leq\frac12\sqrt{|\Omega|}\,
 e^{-t\,\gap(P_\GL)}
 \leq\frac12 q^{n/2}e^{-t\kappa/(nq)}.
\]
Taking $t$ to be the ceiling in~\cref{eq:main} makes this at most
$\varepsilon$, proving the second mixing bound.

Proper coloring is a Gibbs model with interaction
$\one_{\{c\neq d\}}$ and unit vertex weights. Conditional connectivity
follows from~\cref{eq:list-surplus} and \cref{lem:connectivity}.
\cref{cor:si} and \cref{lem:marginal} verify the other
hypotheses of \cref{thm:entropy}, with
$D=\max\{3,\Delta\}$. This proves~\cref{eq:mlsi-main} and gives
$O_{\Delta,q}(n\log(n/\varepsilon))$ mixing.

To make the mixing constant independent of $q$, first consider
$\Delta+5\leq q\leq2\Delta$. In this range,
\begin{equation}
 \eta\leq\Delta+4,
 \qquad b\geq\frac1{2\Delta}\left(\frac45\right)^\Delta.
 \label{eq:bounded-palette}
\end{equation}
If this interval is nonempty then $\Delta\geq5$, and
\cref{thm:entropy} has parameters depending only on $\Delta$.
For $q\geq2\Delta+1$, the classical coupling argument of
Jerrum~\cite{Jerrum} gives $O_\Delta(n\log(n/\varepsilon))$ mixing,
uniformly in $q$.
\end{proof}

\begin{remark}[Dependence on $\Delta$]
The constant $C_\Delta$ in \cref{thm:main} can be chosen with
$C_\Delta\leq\exp(\exp(O(\Delta)))$. This doubly exponential upper
bound follows by substituting~\cref{eq:bounded-palette} into the
quantitative estimate of Chen--Liu--Vigoda~\cite[Theorem~1.12]{CLV}.
It describes the bound supplied by our proof; the optimal constant may
have a smaller dependence on $\Delta$.
\end{remark}

\section{Limitations and further questions}
\label{sec:questions}

The factor four in~\cref{eq:certificate} gives only a zero lower bound
when $q=\Delta+4$. This is a limitation of this estimate, not a
counterexample to rapid mixing. We leave determining the mixing time for
$q=\Delta+2$ as an open problem. We also leave open the optimal
dependence of $C_\Delta$ on $\Delta$ in the mixing bound
of~\cref{thm:main}.

\Needspace{10\baselineskip}


\begin{thebibliography}{BCDP06}

\bibitem[ALG21]{ALG}
Dorna Abdolazimi, Kuikui Liu, and Shayan Oveis Gharan.
\newblock A matrix trickle-down theorem on simplicial complexes and
applications to sampling colorings.
\newblock In \emph{62nd IEEE Symposium on Foundations of Computer Science
(FOCS)}, 2021.
\newblock \href{https://doi.org/10.1109/FOCS52979.2021.00024}{doi:10.1109/FOCS52979.2021.00024}.
\newblock Full version: \href{https://arxiv.org/abs/2106.03845}{arXiv:2106.03845}.

\bibitem[AJK$^{+}$24]{AJKPV}
Nima Anari, Vishesh Jain, Frederic Koehler, Huy Tuan Pham, and
Thuy-Duong Vuong.
\newblock Universality of spectral independence with applications to
fast mixing in spin glasses.
\newblock In \emph{Proceedings of the 2024 Annual ACM-SIAM Symposium on
Discrete Algorithms (SODA)}, pages 5029--5056, 2024.
\newblock \href{https://doi.org/10.1137/1.9781611977912.181}{doi:10.1137/1.9781611977912.181}.
\newblock Full version: \href{https://arxiv.org/abs/2307.10466v1}{arXiv:2307.10466v1}.

\bibitem[BLO26]{BLO}
Mathews Boban, Anqi Li, and Shayan Oveis Gharan.
\newblock Rank-1-perturbed trickledown theorems: Mixing time of Glauber
dynamics for the Sherrington--Kirkpatrick model up to
$\beta\leq1/2+\varepsilon$.
\newblock Preprint, 2026.
\newblock \href{https://arxiv.org/abs/2609.13138v1}{arXiv:2609.13138v1}.

\bibitem[BCDP06]{BCDP}
Anne-S\'everine Boudou, Pietro Caputo, Paolo Dai Pra, and Gustavo Posta.
\newblock Spectral gap estimates for interacting particle systems via a
Bochner-type identity.
\newblock \emph{Journal of Functional Analysis}, 232(1):222--258, 2006.
\newblock \href{https://arxiv.org/abs/math/0505533v3}{arXiv:math/0505533v3}.

\bibitem[CCFV25]{CCFV}
Charlie Carlson, Xiaoyu Chen, Weiming Feng, and Eric Vigoda.
\newblock Optimal mixing for randomly sampling edge colorings on trees
down to the max degree.
\newblock In \emph{Proceedings of the 2025 Annual ACM-SIAM Symposium on
Discrete Algorithms (SODA)}, pages 5418--5433, 2025.
\newblock \href{https://doi.org/10.1137/1.9781611978322.184}{doi:10.1137/1.9781611978322.184}.
\newblock Full version: \href{https://arxiv.org/abs/2407.04576v1}{arXiv:2407.04576v1}.

\bibitem[CV25]{CV}
Charlie Carlson and Eric Vigoda.
\newblock Flip dynamics for sampling colorings: Improving
$(11/6-\varepsilon)$ using a simple metric.
\newblock In \emph{Proceedings of the 2025 Annual ACM-SIAM Symposium on
Discrete Algorithms (SODA)}, pages 2194--2212, 2025.
\newblock \href{https://doi.org/10.1137/1.9781611978322.71}{doi:10.1137/1.9781611978322.71}.
\newblock Full version: \href{https://arxiv.org/abs/2407.04870v2}{arXiv:2407.04870v2}.

\bibitem[CL26a]{CLG26}
Xiaoyu Chen and Kuikui Liu.
\newblock A spectral local-to-global principle for spin systems on graphs
with girth at least five.
\newblock Preprint, 2026.
\newblock \href{https://arxiv.org/abs/2608.25491v1}{arXiv:2608.25491v1}.

\bibitem[CL26b]{CL26}
Xiaoyu Chen and Kuikui Liu.
\newblock Spectral gap of down-up walks via trickle-down: A simplified
and sharpened analysis.
\newblock Preprint, 2026.
\newblock \href{https://arxiv.org/abs/2609.19514v1}{arXiv:2609.19514v1}.

\bibitem[CLV26]{CLV}
Zongchen Chen, Kuikui Liu, and Eric Vigoda.
\newblock Optimal mixing of Glauber dynamics: Entropy factorization via
high-dimensional expansion.
\newblock \emph{SIAM Journal on Computing},
55(4):STOC21-104--STOC21-153, 2026. Published online in 2023.
\newblock \href{https://doi.org/10.1137/21M1443340}{doi:10.1137/21M1443340}.
\newblock Full version: \href{https://arxiv.org/abs/2011.02075v4}{arXiv:2011.02075v4}.

\bibitem[CWZZ25]{CWZZ}
Zejia Chen, Yulin Wang, Chihao Zhang, and Zihan Zhang.
\newblock Decay of correlation for edge colorings when $q>3\Delta$.
\newblock In \emph{52nd International Colloquium on Automata, Languages,
and Programming (ICALP)}, LIPIcs 334, Article 54, 2025.
\newblock \href{https://doi.org/10.4230/LIPIcs.ICALP.2025.54}{doi:10.4230/LIPIcs.ICALP.2025.54}.

\bibitem[DHP20]{DHP}
Michelle Delcourt, Marc Heinrich, and Guillem Perarnau.
\newblock The Glauber dynamics for edge-colorings of trees.
\newblock \emph{Random Structures \& Algorithms}, 57(4):1050--1076, 2020.
\newblock \href{https://doi.org/10.1002/rsa.20960}{doi:10.1002/rsa.20960}.
\newblock Full version: \href{https://arxiv.org/abs/1812.05577}{arXiv:1812.05577}.

\bibitem[GJM$^{+}$26]{GJMPS}
Andreas G\"obel, Matthew Jenssen, Marcus Michelen, Marcus Pappik,
Will Perkins, and Leon Schiller.
\newblock A simple proof of rapid mixing on random regular graphs beyond
uniqueness.
\newblock Preprint, 2026.
\newblock \href{https://arxiv.org/abs/2606.27545v1}{arXiv:2606.27545v1}.

\bibitem[GZ26]{GZ}
Heng Guo and Xinyuan Zhang.
\newblock Approximating spin systems on planar graphs.
\newblock Preprint, 2026.
\newblock \href{https://arxiv.org/abs/2608.06172v2}{arXiv:2608.06172v2}.

\bibitem[Jer95]{Jerrum}
Mark Jerrum.
\newblock A very simple algorithm for estimating the number of $k$-colorings
of a low-degree graph.
\newblock \emph{Random Structures \& Algorithms}, 7(2):157--165, 1995.
\newblock \href{https://doi.org/10.1002/rsa.3240070205}{doi:10.1002/rsa.3240070205}.

\bibitem[Wan26]{Wang26}
Sihan Wang.
\newblock Optimal mixing of Glauber dynamics for the
Sherrington--Kirkpatrick model at $\beta<1/2$.
\newblock Preprint, 2026.
\newblock \href{https://arxiv.org/abs/2608.22159v2}{arXiv:2608.22159v2}.

\bibitem[WZZ24]{WZZ}
Yulin Wang, Chihao Zhang, and Zihan Zhang.
\newblock Sampling proper colorings on line graphs using $(1+o(1))\Delta$
colors.
\newblock In \emph{56th Annual ACM Symposium on Theory of Computing
(STOC)}, pages 1688--1699, 2024.
\newblock \href{https://doi.org/10.1145/3618260.3649724}{doi:10.1145/3618260.3649724}.
\newblock Full version: \href{https://arxiv.org/abs/2307.08080v2}{arXiv:2307.08080v2}.

\end{thebibliography}
\end{document}